\documentclass[12pt]{amsart}
\usepackage{amsmath,amscd,amsfonts,amssymb, color,bbm, bm}
\usepackage{latexsym, graphicx, pstricks,rotating, enumerate}
\usepackage[pdftex,bookmarks,colorlinks]{hyperref}
\definecolor{darkblue}{rgb}{0.0,0.0,0.3}
\hypersetup{colorlinks=true,citecolor=darkblue,
unicode=true,pdftitle=toeplitz.pdf,pdfauthor=Ritabrata
Sengupta,bookmarks=false, urlcolor=darkblue} 
\numberwithin{equation}{section}
\newtheorem{prop}{Proposition}[section]
\newtheorem{defin}{Definition}[section]
\newtheorem{rem}{Remark}[section]

\newtheorem{theorem}{Theorem}[section]
\newtheorem{lemma}{Lemma}[section]
\newtheorem{corollary}{Corollary}[section]

\setlength{\oddsidemargin}{0pt}
\setlength{\evensidemargin}{0pt}
\setlength{\textwidth}{6.7in}
\setlength{\topmargin}{0in}
\setlength{\textheight}{8.5in}

\newcommand{\tr}{\mathrm{Tr}}
\newcommand{\dd}{\mathrm{d}}

\newcommand{\ket}[1]{\ensuremath{\left|#1\right\rangle}}
\newcommand{\braket}[2]{\ensuremath{\left|#1\right\rangle
\hspace{-3pt}\left\langle #2\right|}}
\setcounter{secnumdepth}{4}
\begin{document}
\title{Exchangeable, stationary and entangled chains of
Gaussian states}
\author{K. R. Parthasarathy}
\address{Theoretical Statistics and Mathematics Unit, Indian
Statistical Institute, Delhi Centre, 7 S J S Sansanwal Marg,
New Delhi 110 016, India}
\email[K R Parthasarathy]{\href{mailto:krp@isid.ac.in}{krp@isid.ac.in}}
\author{Ritabrata Sengupta}
\address{Theoretical Statistics and Mathematics Unit, Indian
Statistical Institute, Delhi Centre, 7 S J S Sansanwal Marg,
New Delhi 110 016, India}
\email[Ritabrata Sengupta]{\href{mailto:rb@isid.ac.in}{rb@isid.ac.in}}
\begin{abstract}
\par We explore conditions on the covariance matrices of a
consistent chain of mean zero finite mode Gaussian states in
order that the chain may be exchangeable or stationary. For
an exchangeable chain our conditions are necessary and
sufficient. Every stationary Gaussian chain admits an
asymptotic entropy rate. Whereas an exchangeable chain
admits a simple expression for its entropy rate, in our
examples of stationary chains the same admits an integral
formula based on the asymptotic eigenvalue distribution for
Toeplitz matrices. An example of a stationary entangled
Gaussian chain is given.

\smallskip
\noindent \textbf{Keywords.} Gaussian state, exchangeable,
stationary and entangled Gaussian chains, entropy rate. 

\smallskip

\noindent \textbf{Mathematics Subject Classification
(2010):} 81P45, 94A15, 54C70.
\end{abstract}
\thanks{The authors thank Prof. Ajit Iqbal Singh for 
useful comments and suggestions. RS acknowledges financial
support from the National Board for Higher Mathematics,
Govt. of India.} 
\maketitle
\section{Introduction}

\par The importance of finite mode Gaussian states and their
covariance matrices in general quantum theory as well as
quantum information has been highlighted extensively in the
literature. A comprehensive survey of Gaussian states and
their properties can be found in the
book of Holevo \cite{MR2797301}. For their applications to 
quantum information theory the reader is referred to the
survey article by Weedbrook et al \cite{RevModPhys.84.621}
as well as Holevo's book \cite{zbMATH06069722}.  For our
reference we use \cite{arvindsurvey, MR2662722,
zbMATH06185273} for Gaussian states and for notations in the
following sections we use \cite{krprb}. While working
on this paper, Mozrzymas, Rutkowski, and Studzi{\'n}ski had
posted in arXiv \cite{1505.06422} an article similar in
spirit, but for finite dimensional Hilbert spaces.

\par In the present paper our concern is with a chain of
finite mode Gaussian states constituting a consistent
sequence exhibiting properties like exchangeability,
stationarity, and entanglement. All these notions are easily
translated into properties of infinite covariance matrices. 

\par If $\rho$ is a state of a quantum system and
$X_i,\,i=1,2$ are two real valued observables, or
equivalently, self-adjoint operators with finite second
moments in a state $\rho$ then the covariance between $X_1$
and $X_2$ in the state $\rho$ is the scalar quantity
$\tr\left(\frac{X_1 X_2 +X_2 X_1}{2}\right)\rho - \tr X_1
\rho\, \cdot\, \tr X_2\rho$, which is denoted by
$\mathrm{Cov}_\rho (X_1,X_2)$. Suppose $q_1,p_1;\,
q_2,p_2;\, \cdots ;\, q_n,p_n$ are the position - momentum
pairs of observables of a quantum system with $n$ degrees of
freedom obeying the canonical  commutation relations. Then
we express 
\[(X_1,X_2,\cdots,X_{2n}) =
(q_1,p_1,q_2,p_2,\cdots,q_n,p_n).\]
If $\rho$ is a state in which all the $X_j$'s have finite
second moments we write 
\begin{equation}\label{eq:1.1}
S_\rho = [[ \mathrm{Cov}_\rho(X_i, X_j)]], \quad i,j \in \{
1,2,\cdots, 2n\}.
\end{equation} 
We call $S_\rho$ the covariance matrix of the position
momentum observables. If we write 
\begin{equation}\label{eq:1.2}
J_{2n}= \begin{bmatrix}\begin{array}{rr}
0 & 1 \\ -1 & 0
\end{array} &&& \\
& \begin{array}{rr}
0 & 1 \\ -1 & 0
\end{array} &&\\ 
 &  & \ddots & \\
&&& \begin{array}{rr}
0 & 1 \\ -1 & 0
\end{array}
\end{bmatrix}
\end{equation}
or equivalently $\bigoplus_1^n \begin{bmatrix} 0 & 1 \\ -1 &
0 \end{bmatrix} $ for the $2n \times 2n$ block diagonal
matrix, the complete Heisenberg uncertainty principle for
all the position and momentum observables assumes the form
of the following matrix inequality 
\begin{equation}\label{eq:1.3}
S_\rho +  \frac{\imath}{2} J_{2n} \geq 0.
\end{equation} 
Conversely, if $S$ is any real $2n \times 2n$ symmetric
matrix obeying the inequality $S_\rho +  \frac{\imath}{2}
J_{2n} \geq 0$, then there exists a state $\rho$ such that
$S$ is the covariance matrix $S_\rho$ of the observables $q_1,p_1;\,
q_2,p_2;\, \cdots ;\, q_n,p_n$. In such a case $\rho$ can be
chosen to be a Gaussian state with mean zero. In view of
this we make a formal definition.
\begin{defin}\label{def:1.1}
A $2n \times 2n$ real symmetric positive matrix $S$ is said
to be a \emph{G-matrix} if it satisfies the inequality 
\begin{equation} \label{eq:1.4}
S +  \frac{\imath}{2} J_{2n} \geq 0.
\end{equation}
\end{defin}

\par Suppose 
\begin{equation}\label{eq:1.5}
\Sigma = [[A_{ij}]], \quad i,\, j \in \{ 1,2, \cdots\}
\end{equation} 
is an infinite matrix where each $A_{ij}$ is a $2k \times
2k$ real matrix and $A_{ij}^T=A_{ji}$ for all $i,\,j$.  For
any finite subset $I=\{i_1 < i_2 < \cdots < i_n\} \subset \{
1, 2, \cdots \}$, let  
\begin{equation}\label{eq:1.6}
\Sigma(I) =[[A_{i_r i_s}]],\quad r,\,s \in \{1,2,\cdots\}
\end{equation}
be the $2kn \times 2kn$ matrix obtained from $\Sigma$ by
restriction to its rows and columns numbered $  i_1 < i_2 <
\cdots < i_n$. 

\begin{defin}\label{def:1.2}
We say that $\Sigma$ is a \emph{G-chain of
order $k$}  if $\Sigma(I)$ is the covariance matrix of a
$kn$-mode zero mean Gaussian state $\rho(I)$ in the boson
Fock space $\Gamma(\mathbb{C}^{kn}) = \mathcal{H}_{i_1}
\otimes \mathcal{H}_{i_2} \otimes  \cdots \otimes
\mathcal{H}_{i_n}$ where $\mathcal{H}_j$ denotes the $j$-th
copy of the Hilbert space $\mathcal{H} =
\Gamma(\mathbb{C}^k)$, $j=1,2, \cdots.$ 
\end{defin}

\par If $I= \{ i_1 < i_2 < \cdots < i_n \}$ and $I'= \{ i_1 < i_2
< \cdots < i_{n+m} \} $ then clearly $\rho(I)$ is the
$I$-marginal of the state $\rho(I')$. Thus $\Sigma$ describes
a consistent family of zero mean Gaussian states with finite
number of modes which are multiples of $k$. It is a standard
result of Gaussian states that $\Sigma$ is a G-chain of
order $k$ if and only if the following matrix inequalities
hold:
\begin{equation}\label{eq:1.7}
\Sigma(\{1,2,\cdots, n\}) +\frac{\imath}{2} J_{2kn} \ge 0, \quad n=1,2,\cdots
\end{equation}

\begin{defin}\label{def:1.3} 
We say that $\Sigma$ in equation (\ref{eq:1.5}) is an
\emph{exchangeable  G-chain}  if it is a G-chain and there
exist two $2k \times 2k$ matrices $A,~B$ such that
\begin{equation}\label{eq:1.8}
A_{ij} = \left\{
\begin{array}{lcl}
B & \text{if} & j>i, \\
A & \text{if} & j=i, \\
B^T & \text{if} & j<i. 
\end{array} \right.
\end{equation}
In such a case we write
\begin{eqnarray}\label{eq:1.9}
\Sigma &=& \Sigma(A,B),\nonumber\\
\Sigma(I) &=& \Sigma(I; A,B).
\end{eqnarray}
\end{defin}

\par It is clear that for any two finite subsets $I$ and $I'$
of $\{1,2,\cdots \}$ with the same cardinality an
exchangeable G-chain $\Sigma$ satisfies 
\begin{equation*}
\Sigma(I;A,B) = \Sigma(I';A,B)
\end{equation*}
and therefore the corresponding Gaussian states $\rho(I)$
and $\rho(I')$ are same, i.e. the quantum version of de Finitti
exchangeability property \cite{deFine} holds.  

\par In this paper we shall show that two $2k \times 2k$
matrices $A$ and $B$ determine a G-chain $\Sigma(A,B)$ if
and only if $B=B^T,~B>0$ and $A-B$ is the covariance matrix
of a $k$-mode Gaussian state.

\par For any finite mode Gaussian state $\rho$ with
covariance matrix $C$ denote its von Neumann entropy
$S(\rho)$ by $S(C)$. we shall prove that for any
exchangeable G-chain $\Sigma(A,B)$, the sequence
$\left\{\frac{1}{n} S(\Sigma(\{1,2,\cdots,n\},A,B))\right\}$ decreases
monotonically to the limit $S(A-B)$. 

\begin{defin}\label{def:1.4}
We say that a G-chain $\Sigma$ given by (\ref{eq:1.5}) is
\emph{stationary} if there exist $2k \times 2k$ matrices
$A,\, B_1,\, B_2,\cdots$ such that
\begin{equation}\label{eq:1.10}
A_{ij} = \left\{
\begin{array}{lcl}
A & \text{if} & i=j, \\
B_{j-i} & \text{if} & j>i, \\
B_{i-j}^T & \text{if} & j<i
\end{array} \right.
\end{equation}
for all $i,\,j$.
\end{defin} 

\par For such a stationary G-chain the Gaussian states
$\rho(I)$ and $\rho(I+1)$ are same for any finite subset
$I\subset \{1,2,\cdots\}$. This translation invariance
property shows that our definition of stationarity is
similar to such a notion in classical theory of the
stochastic processes \cite{MR0058896}. It is an interesting
problem to find necessary and sufficient conditions on the
matrices $A,\, B_1, \, B_2,\cdots $ so that (\ref{eq:1.10})
yields a stationary G-chain. Even though we do not have an
answer to this question  we shall construct a large class of
examples of such stationary G-chains. 

\par For any stationary G-chain $\Sigma$, consider the von
Neumann entropy of the Gaussian state
$\rho(\{1,2,\cdots,n\},A,B)$, namely,
$S(\Sigma(\{1,2,\cdots,n\}:A,B))$. The strong sub-additivity
property of entropy \cite{MR0345558} implies  that the sequence
$\left\{\frac{1}{n} S(\Sigma(\{1,2,\cdots,n\}: A,B))
\right\}$ decreases monotonically to a limit $\bar{S}(\Sigma)$,
which may be called the \emph{entropy rate} of the stationary
chain. Using the  Kac-Murdock-Szeg{\"o} theory
for asymptotic eigenvalue distributions of
Toeplitz matrices \cite{MR0059482, MR761763} we shall get an
integral formula for the entropy rate of an interesting
example of a stationary G-chain of order $k$. 

\par When $k=1$, we shall construct a stationary G-chain in
which states of the form $\rho(\{1,2\})$ and hence $\rho( \{
1,2,\cdots, n\}),~n=2,3,4,\cdots$ are entangled. However it
would be more interesting to find examples of stationary
G-chains in which, for some fixed finite nonempty subset $I$
of $\{1,2,\cdots\}$, states of the form $\rho(I\cup
(I+n))$ with marginals $\rho(I)$ remain entangled for
arbitrarily large $n$. That would ensure preservation of
entanglement after an arbitrarily large lapse of time. 


\section{Exchangeable G-chain}

\par Our first main result gives necessary and sufficient
conditions on a pair $(A,B)$ of real $2k \times 2k$ matrices
so that the infinite matrix $\Sigma(A,B)$ in Definition
\ref{def:1.3} is an exchangeable G-chain.

\begin{theorem}\label{th:2.1}
Let $(A,B)$ be a pair of real $2k \times 2k$ matrices. Then
$\Sigma(A,B)$ is a G-chain if and only if $A$ and $B$ are
nonnegative definite and $A-B$ is a G-matrix.
\end{theorem}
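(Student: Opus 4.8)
The plan is to characterize when $\Sigma(A,B)$ is a G-chain by translating the defining inequalities \eqref{eq:1.7} into conditions on $A$ and $B$. By the criterion \eqref{eq:1.7}, $\Sigma(A,B)$ is a G-chain if and only if the finite truncation $\Sigma(\{1,\dots,n\};A,B) + \tfrac{\imath}{2}J_{2kn} \ge 0$ for every $n$. The matrix $\Sigma(\{1,\dots,n\};A,B)$ has the block form with $A$ on the diagonal, $B$ above, and $B^T$ below. I need to find a clean decomposition of this block matrix that exposes the roles of $A$, $B$, and $A-B$.

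The key algebraic observation I would exploit is that this exchangeable block matrix can be written as
\begin{equation*}
\Sigma(\{1,\dots,n\};A,B) = I_n \otimes (A-B) + E_n \otimes B,
\end{equation*}
where $E_n$ is the $n \times n$ matrix of all ones, provided $B$ is symmetric. (Here the off-diagonal blocks are $B$ above and $B^T$ below; for this tensor decomposition to hold we need $B = B^T$, which I expect to recover as a necessary condition.) First I would establish that symmetry of $B$ is forced: since $A_{ij}^T = A_{ji}$ must hold and, more to the point, consistency of the full positive-semidefinite structure across all finite sets, the off-diagonal entry must satisfy $B = B^T$ — I would verify this by examining a $2\times 2$ block arrangement and comparing with the transpose condition. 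Once $B = B^T$, the tensor decomposition above is valid, and since $E_n$ has eigenvalues $n$ (once) and $0$ ($n-1$ times) with the all-ones eigenvector and its orthogonal complement, I can simultaneously block-diagonalize.

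Concretely, conjugating by $U \otimes I_{2k}$ where $U$ diagonalizes $E_n$, the matrix $I_n \otimes (A-B) + E_n \otimes B$ becomes block diagonal with one block equal to $(A-B) + nB$ and $n-1$ blocks equal to $A - B$. I must handle the symplectic term $\tfrac{\imath}{2}J_{2kn} = I_n \otimes \tfrac{\imath}{2}J_{2k}$, which is already in the correct tensor form and is invariant under the conjugation by $U \otimes I_{2k}$ since $J$ lives in the second factor. Therefore the inequality $\Sigma(\{1,\dots,n\};A,B) + \tfrac{\imath}{2}J_{2kn} \ge 0$ decouples into
\begin{equation*}
(A-B) + \tfrac{\imath}{2}J_{2k} \ge 0 \quad\text{and}\quad (A-B) + nB + \tfrac{\imath}{2}J_{2k} \ge 0.
\end{equation*}
The first says precisely that $A-B$ is a G-matrix. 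For the second to hold for all $n$, letting $n \to \infty$ I expect to force $B \ge 0$; conversely if $B \ge 0$ and $A - B$ is a G-matrix then $(A-B) + nB + \tfrac{\imath}{2}J_{2k} \ge (A-B) + \tfrac{\imath}{2}J_{2k} \ge 0$ for all $n \ge 0$, giving sufficiency. Nonnegativity of $A$ then follows from $A = (A-B) + B \ge 0$ since a G-matrix is in particular nonnegative definite and $B \ge 0$.

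The main obstacle I anticipate is the careful justification that $B \ge 0$ is both necessary and sufficient at the boundary: the condition $(A-B) + nB + \tfrac{\imath}{2}J_{2k} \ge 0$ for all $n$ is equivalent to $B \ge 0$ only after accounting for the direction $v$ for which $\langle v, Bv\rangle$ could be negative — I would test against real vectors $v$ to isolate the quadratic form of $B$ from the skew term $J_{2k}$, since $\langle v, \imath J_{2k} v\rangle = 0$ for real $v$. Dividing by $n$ and taking $n \to \infty$ shows $\langle v, Bv\rangle \ge 0$ on real $v$, hence $B \ge 0$ as $B$ is real symmetric. The reverse direction is then immediate. I would also double-check the subtlety that the asymptotic eigenvalue argument only uses the single scaling matrix $E_n$, so no Toeplitz machinery is needed here — that is reserved for the stationary case.
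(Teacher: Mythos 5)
Your argument from the point where $B=B^T$ has been secured is correct, and from there on it is essentially the paper's own proof: your splitting along the all-ones eigenvector of $E_n$ and its orthogonal complement is exactly the paper's identity \eqref{eq:2.1} specialized to symmetric $B$, and the decoupled inequalities $(A-B)+\frac{\imath}{2}J_{2k}\ge 0$ and $(A-B)+nB+\frac{\imath}{2}J_{2k}\ge 0$, the passage $n\to\infty$ giving $B\ge 0$, and $A=(A-B)+B\ge 0$ all appear there. The genuine gap is the step you defer and then justify incorrectly: that the G-chain property forces $B=B^T$. The transpose condition $A_{ij}^T=A_{ji}$ gives nothing, because in Definition \ref{def:1.3} the blocks below the diagonal are \emph{defined} to be $B^T$, so the infinite matrix is symmetric for every real $B$, symmetric or not; and no truncation of bounded size forces symmetry either. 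Concretely, take $k=1$, $A=\lambda I_2$ with $\lambda>\frac12$, and $B=\epsilon\begin{bmatrix}0&1\\-1&0\end{bmatrix}$: diagonalizing the commuting $2\times 2$ blocks shows that $\Sigma(\{1,2\};A,B)+\frac{\imath}{2}J_{4}\ge 0$ whenever $0<\epsilon\le\lambda-\frac12$, and for each fixed $n$ the $n$-th inequality holds once $\epsilon$ is small enough (a norm perturbation bound, since $A\otimes I_n+\frac{\imath}{2}J_{2n}\ge(\lambda-\frac12)I$), even though $B$ is antisymmetric. So symmetry of $B$ can only come from the inequalities for \emph{all} $n$ simultaneously, and until it is established your tensor decomposition $I_n\otimes(A-B)+E_n\otimes B$ simply does not represent $\Sigma_n(A,B)$, whose correct form is $A\otimes I_n+B\otimes N_n+B^T\otimes N_n^T$ with $N_n$ the strictly upper triangular all-ones matrix; the whole decoupling argument is unavailable.

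The paper closes exactly this gap by retaining the extra term $\frac12(B-B^T)\otimes(N_n-N_n^T)$ in its decomposition and testing with the complex vectors $\phi_n=\frac{1}{\sqrt n}\left(1,\omega,\dots,\omega^{n-1}\right)^T$, $\omega=e^{2\pi\imath/n}$, for which $\langle\phi_n|N_n-N_n^T|\phi_n\rangle=\imath\cot\frac{\pi}{n}$ diverges as $n\to\infty$; multiplying the resulting inequality by $\tan\frac{\pi}{n}$ and letting $n\to\infty$ yields $\frac{\imath}{2}(B-B^T)\ge 0$, and a positive semidefinite Hermitian matrix with zero trace must vanish, so $B=B^T$. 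Note that this step is invisible to real test vectors --- which are all your outline uses --- since $\langle\psi,(B-B^T)\psi\rangle=0$ for every real $\psi$; some genuinely complex vector, together with unboundedly large $n$, is indispensable. With this lemma inserted ahead of your decomposition, your proof becomes complete and coincides with the paper's.
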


\begin{proof}
Choose and fix a positive integer $n$ and put $I= \{ 1,2,
\cdots, n\}$. Write 
\[\Sigma_n(A,B) = \Sigma (I:A,B).\]
Denote by $I_n$ the identity matrix of order $n$ and $N_n$
the upper triangular nilpotent matrix with all its upper
triangular entries equal to $1$ and the rest equal to zero.
Let 
\[\ket{\psi_n} =\frac{1}{\sqrt{n}}[1,1,\cdots,1]^T,\]
a unit column vector of length $n$. Then 
\[I_n + N_n + N_n^T = n \braket{\psi_n}{\psi_n}.\]
Then we have 
\[\Sigma_n(A,B) = A \otimes I_n +B\otimes N_n + B^T \otimes
N_n^T.\]
Noting that $J_{2kn}= J_{2k}\otimes I_n$ we can now write
\begin{eqnarray}
\Sigma_n(A,B) +\frac{\imath}{2} J_{2kn} &=& \left( A +
\frac{\imath}{2} J_{2k} - \frac{1}{2} (B + B^T) \right)
\otimes \left( I_n - \braket{\psi_n}{\psi_n} \right)
\nonumber\\
&& +  \left( A + \frac{\imath}{2} J_{2k} + \frac{1}{2} (n-1)
(B + B^T) \right) \otimes \braket{\psi_n}{\psi_n} \nonumber
\\
&& + \frac{1}{2}(B-B^T) \otimes (N_n -N_n^T). \label{eq:2.1}
\end{eqnarray}
In order that $\Sigma(A,B)$ may be a G-chain it is necessary
and sufficient that the left hand side of (\ref{eq:2.1}) is
nonnegative definite for every $n$. To prove necessity we
multiply first both sides of (\ref{eq:2.1})  by $I_n \otimes
\braket{\psi_n}{\psi_n}$ and take relative trace over the
second component. Noting that for any real vector
$\ket{\psi}$ in the second Hilbert space, $\langle \psi| N_n -
N_n^T | \psi \rangle =0$ we get the inequality
\begin{equation}\label{eq:2.2}
 A + \frac{\imath}{2} J_{2k} + \frac{1}{2} (n-1) (B + B^T)
\geq 0 \quad \forall n.
\end{equation}
Dividing by $n-1$ and letting $n \rightarrow \infty $ we get
\begin{equation}\label{eq:2.3}
\frac{1}{2} (B + B^T) \geq 0.
\end{equation}
Choosing an arbitrary real unit vector $\ket{\psi}$ in the
range of $I_n - \braket{\psi_n}{\psi_n}$, multiplying
both sides of (\ref{eq:2.1}) by $I \otimes
\braket{\psi}{\psi}$ and tracing out over the second Hilbert
space we get 
\begin{equation}\label{eq:2.4}
A+ \frac{\imath}{2} J_{2k} - \frac{1}{2} (B + B^T) \geq 0.
\end{equation}
In other words $A - \frac{1}{2} (B + B^T)$ is a G-matrix.
Now we consider the complex unit vector 
\[ \ket{\phi_n} = \frac{1}{\sqrt{n}}
[1,\omega,\omega^2,\cdots, \omega^{n-1}]^T \]
where $\omega=e^{\frac{2\pi\imath}{n}}$, an $n$-th root of
unity. Simple algebra shows that 
\[ \langle \phi_n | N_n | \phi_n \rangle = \frac{1}{
\bar{\omega}-1} \]
and therefore 
\[\langle \phi_n | N_n - N_n^T | \phi_n \rangle =2\imath Im
\frac{1}{ \bar{\omega}-1} = \imath \cot \frac{\pi}{n}. \]
Now, multiplying both sides of (\ref{eq:2.1}) by $I_n
\otimes \braket{\phi_n}{\phi_n}$ and tracing over the second
Hilbert space we get the inequality 
\begin{equation}\label{eq:2.5}
A+ \frac{\imath}{2} J_{2k} - \frac{1}{2} (B + B^T) + \imath
\frac{B-B^T}{2} \cot \frac{\pi}{n} \geq 0
\end{equation}
for $n=1,2,\cdots$. Multiplying by $\tan \frac{\pi}{n}$ (for
$n \ge 3$) and letting $n \rightarrow \infty$ we get the
inequality 
\[\frac{\imath}{2}(B-B^T)\ge 0.\]
Since the left hand side is a Hermitian matrix with trace zero it
follows that $B=B^T$ and (\ref{eq:2.4}) implies that $A-B$
is a G-matrix. Together with (\ref{eq:2.3}) the proof of
necessity is complete.

\par To prove sufficiency, observe that $A-B$ being a
G-matrix and $B$ being positive implies that $A-B +n B$ is a
G-matrix for every $n=1,2,\cdots$. Now the identity
(\ref{eq:2.1}) implies that 
\begin{eqnarray*}
\Sigma_n(A,B) + \frac{\imath}{2} J_{2kn} &=& \left[(A-B) +
\frac{\imath}{2} J_{2k} \right] \otimes \left( I_n -
\braket{\psi_n}{\psi_n} \right)  + \left[ A+(n-1)B + \frac{\imath}{2} J_{2k} \right] \otimes
\braket{\psi_n}{\psi_n} \ge0
\end{eqnarray*}
for every $n$ and therefore $\Sigma_n(A,B)$ is a G-matrix for
every $n$. In other words $\Sigma(A,B)$ is a G-chain.
\end{proof}

\begin{corollary}\label{cor:2.1}
In any exchangeable G-chain $\Sigma(A,B)$  of order $k$, for
every finite set $I \subset \{1,2,\cdots\}$, the underlying
Gaussian state $\rho(I)$ is separable.
\end{corollary}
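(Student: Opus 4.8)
The plan is to reduce the claim to the finite sections $\Sigma_n(A,B)$ and then exhibit an explicit representation of the associated Gaussian state as a convex mixture of product states. By exchangeability, $\Sigma(I;A,B)=\Sigma(I';A,B)$ whenever $|I|=|I'|$, so $\rho(I)$ depends only on the cardinality $n=|I|$; it therefore suffices to treat $I=\{1,2,\dots,n\}$ with covariance matrix $\Sigma_n(A,B)$ and to show that the corresponding state is a mixture of product states across the $n$ parties, each a $k$-mode system.

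First I would rewrite $\Sigma_n(A,B)$ so as to display a direct-sum lower bound. Theorem \ref{th:2.1} guarantees that $B=B^T$, that $B\ge 0$, and that $A-B$ is a G-matrix. Using the identity $I_n+N_n+N_n^T=n\braket{\psi_n}{\psi_n}$ already established in the proof of Theorem \ref{th:2.1}, one has $N_n+N_n^T=n\braket{\psi_n}{\psi_n}-I_n$, whence
\begin{equation*}
\Sigma_n(A,B)=A\otimes I_n+B\otimes(N_n+N_n^T)=(A-B)\otimes I_n+n\,B\otimes\braket{\psi_n}{\psi_n}.
\end{equation*}
Since $B\ge 0$ and $\braket{\psi_n}{\psi_n}\ge 0$, the second summand is nonnegative definite, so that
\begin{equation*}
\Sigma_n(A,B)\ge (A-B)\otimes I_n=\bigoplus_{j=1}^{n}(A-B).
\end{equation*}

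The second step is to turn this domination inequality into a separable decomposition. Set $\Delta=n\,B\otimes\braket{\psi_n}{\psi_n}\ge 0$ and let $\mu$ be the centred classical Gaussian probability measure on $\mathbb{R}^{2kn}$ with covariance $\Delta$ (no uncertainty constraint is needed, since $\Delta$ is merely a nonnegative real symmetric matrix). Let $\rho_0$ be the $n$-fold product state whose $j$-th factor is the zero-mean $k$-mode Gaussian state with covariance matrix $A-B$, which is legitimate precisely because $A-B$ is a G-matrix. For $\xi=(\xi_1,\dots,\xi_n)\in\mathbb{R}^{2kn}$ let $\rho_0(\xi)$ be the Weyl translate of $\rho_0$ by $\xi$; since the displacement factorises as a product of local Weyl operators, $\rho_0(\xi)$ is again a product state across the $n$ parties, with the same covariance $\bigoplus_j(A-B)$ and mean $\xi$. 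Then
\begin{equation*}
\rho(\{1,2,\dots,n\})=\int_{\mathbb{R}^{2kn}}\rho_0(\xi)\,d\mu(\xi)
\end{equation*}
is a convex mixture of product states, hence separable; by the law of total covariance its covariance matrix equals $\bigoplus_j(A-B)+\Delta=\Sigma_n(A,B)$, and uniqueness of the mean-zero Gaussian state with a given covariance identifies this mixture with $\rho(\{1,2,\dots,n\})$.

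The inequality $\Sigma_n(A,B)\ge\bigoplus_j(A-B)$ is immediate once $B\ge 0$ is known, so the real content lies in the constructive separability argument, which is the Gaussian analogue of the sufficiency half of the Werner--Wolf criterion. The point requiring care, and which I expect to be the main obstacle, is the moment bookkeeping for the averaged state: one must verify that each local Weyl translation leaves the product factor a product state with unchanged covariance, and that integrating the displacements against a Gaussian of classical covariance $\Delta$ reproduces exactly $\Sigma_n(A,B)$ while manifestly preserving separability. This is computation rather than inequality, but it is the step on which the conclusion genuinely rests.
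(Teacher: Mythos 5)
Your proposal is correct and follows essentially the same route as the paper: the identical decomposition $\Sigma_n(A,B)=(A-B)\otimes I_n + B\otimes(\text{all-ones matrix})$, with Theorem \ref{th:2.1} supplying $B=B^T\ge 0$ and the G-matrix property of $A-B$. The only difference is that where the paper simply cites Werner and Wolf's separability criterion \cite{WernerWolf}, you prove the needed (sufficiency) direction inline via the mixture of Weyl-displaced product states against a classical Gaussian of covariance $\Delta$ --- which is precisely how that half of the Werner--Wolf theorem is established, so your argument is a self-contained unpacking of the same proof rather than a different one.
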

\begin{proof}
Without loss of generality we may assume that
$I=\{1,2,\cdots,n\}$  for some $n$. Then the covariance
matrix of $\rho(I)$ is equal to the $n \times n$ block
matrix 
\[ \begin{bmatrix}
A & B & \cdots & B \\
B & A & \cdots & B \\
\vdots &\vdots &\ddots &\vdots \\
B & B & \cdots & A
\end{bmatrix} = (A-B) \otimes I_n + B \otimes \begin{bmatrix}
1 & 1 & \cdots & 1 \\
1 & 1 & \cdots & 1 \\
\vdots &\vdots &\ddots &\vdots \\
1 & 1 & \cdots & 1
\end{bmatrix}.
\]
By Theorem \ref{th:2.1}, $(A-B)\otimes I_n$ is the
covariance matrix of an $n$-fold product Gaussian state and
the second summand on the right hand side of the equation
above is a nonnegative definite matrix. Hence by Werner and
Wolf's theorem \cite{WernerWolf} $\rho(I)$ is separable. 
\end{proof}

\section{Examples of stationary G-chain}
\par Let $A,\, B$ be real $2k \times 2k$ symmetric matrices.
For any fixed $j=1,2,\cdots$, denote by $\Delta^j(A,B)$ the
infinite block matrix all of whose diagonal blocks are equal
to $A$, $(n,n+j)$-th and $(n+j,n)$-th blocks are equal to
$B$ for every $n$ and all the remaining blocks are zero
matrices of order  $2k \times 2k$. For example, 
\[
\Delta^1 (A,B) = \begin{bmatrix}
A & B & 0 & 0 & 0 & \cdots \\
B & A & B & 0 & 0 & \cdots \\
0 & B & A & B & 0 & \cdots \\
\vdots & \vdots & \vdots & \vdots & \vdots & \ddots 
\end{bmatrix}.\] 
Denote by $\Delta_n^j(A,B)$ the $2kn \times 2kn $ matrix
obtained by $\Delta^j(A,B)$ by restriction to the first $n$
row and column blocks. For example, 
\begin{eqnarray*}
\Delta_2^1(A,B) &=& \begin{bmatrix} A & B \\ B & A
\end{bmatrix} \\
\Delta_3^2(A,B) &=& \begin{bmatrix} A &0 & B \\ 0 & A & 0 \\
B & 0 & A
\end{bmatrix} 
\end{eqnarray*}
and so on. 
\par Our first result gives a necessary and sufficient
condition for  $\Delta^j(A,B)$ to be a G-chain. 

\begin{theorem}\label{th:3.1}
Let $A,\,B$ be a pair of $2k \times 2k$ real symmetric
matrices. In order that $\Delta^j(A,B)$ may be a G-chain of
order $k$ it is necessary and sufficient that $A+tB$ is a
G-matrix for every $t\in[-2,2]$.
\end{theorem}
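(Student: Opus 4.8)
The plan is to reduce the general $j$ to the case $j=1$ and then diagonalize the tridiagonal Toeplitz pattern. First I would observe that grouping the index set $\{1,2,\cdots\}$ into residue classes modulo $j$ decouples $\Delta^j(A,B)$: two blocks are joined by $B$ precisely when their indices differ by $j$, hence only when they lie in the same residue class, and within each class consecutive members (in that class) differ by $j$. Relabelling the blocks therefore carries $\Delta^j(A,B)$ into a direct sum of $j$ identical copies of $\Delta^1(A,B)$. Since a block permutation is implemented by a real orthogonal matrix of the form $I_{2k}\otimes P$, which leaves $J_{2kn}=J_{2k}\otimes I_n$ invariant, the inequalities (\ref{eq:1.7}) hold for $\Delta^j$ at every truncation size if and only if they hold for $\Delta^1$. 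Thus $\Delta^j(A,B)$ is a G-chain if and only if $\Delta^1(A,B)$ is, and it suffices to treat $j=1$.

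For $j=1$ I would write the truncation as a tensor sum. Let $T_n$ be the $n\times n$ symmetric matrix carrying $1$ on the first super- and sub-diagonals and $0$ elsewhere. Because $B=B^T$,
\[
\Delta_n^1(A,B)+\frac{\imath}{2}J_{2kn}=\left(A+\frac{\imath}{2}J_{2k}\right)\otimes I_n+B\otimes T_n .
\]
The eigenvalues of $T_n$ are $\lambda_\ell=2\cos\frac{\pi\ell}{n+1}$, $\ell=1,\ldots,n$, all lying in the open interval $(-2,2)$, with an orthonormal eigenbasis forming a unitary $W$. Conjugating by $I_{2k}\otimes W$ fixes the first summand (since $W^{*}W=I_n$) and sends the second to $B\otimes\mathrm{diag}(\lambda_1,\ldots,\lambda_n)$, so the truncation is unitarily equivalent to the block-diagonal matrix $\bigoplus_{\ell=1}^{n}\left(A+\lambda_\ell B+\frac{\imath}{2}J_{2k}\right)$.

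This decomposition settles both implications at once. For sufficiency, if $A+tB$ is a G-matrix for every $t\in[-2,2]$ then, as each $\lambda_\ell\in(-2,2)$, every direct summand $A+\lambda_\ell B+\frac{\imath}{2}J_{2k}$ is nonnegative definite, hence so is the whole truncation for every $n$, making $\Delta^1(A,B)$ a G-chain. For necessity, if $\Delta^1(A,B)$ is a G-chain then each block-diagonal summand is nonnegative definite, so $A+\lambda_\ell B$ is a G-matrix for every eigenvalue $\lambda_\ell$ and every $n$. Letting $n$ range over all positive integers, the numbers $2\cos\frac{\pi\ell}{n+1}$ form a dense subset of $[-2,2]$; since $\{t:A+tB+\frac{\imath}{2}J_{2k}\ge 0\}$ is closed (and $A+tB$ is automatically real symmetric for real $t$), it must contain all of $[-2,2]$, endpoints included.

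The step I expect to be the main obstacle is handling necessity at the endpoints: the eigenvalues $2\cos\frac{\pi\ell}{n+1}$ never attain $\pm 2$, so the conclusion there rests on the density of these values in $[-2,2]$ together with closedness of the G-matrix condition. The reduction to $j=1$ and the tensor-sum diagonalization are then routine once the spectrum of $T_n$ is in hand.
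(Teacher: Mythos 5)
Your proof is correct, but it reaches the conclusion by a genuinely different route in the necessity direction. Both arguments hinge on the same core decomposition: diagonalizing the $n\times n$ pattern matrix and rewriting $\Delta_n^j(A,B)+\frac{\imath}{2}J_{2kn}$ as a direct sum of blocks $A+\lambda B+\frac{\imath}{2}J_{2k}$ over its eigenvalues $\lambda$, so the sufficiency halves are essentially identical (the paper does this in equation (\ref{eq:3.4}) for general $j$, bounding the eigenvalues by $2$ via the operator norm of $L_n^j+(L_n^j)^T$). The divergence is in how density of the eigenvalues in $[-2,2]$ is obtained: the paper keeps $j$ arbitrary and invokes the Kac--Murdock--Szeg\"o theorem on asymptotic eigenvalue distributions of Toeplitz matrices, whereas you first reduce to $j=1$ by the residue-class block permutation (implemented by $I_{2k}\otimes P$, which indeed fixes $J_{2k}\otimes I_n$) and then use the classical explicit spectrum $2\cos\frac{\pi\ell}{n+1}$ of the tridiagonal matrix $T_n$. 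Your route is more elementary and self-contained, and it makes explicit the closedness argument needed to capture the endpoints $t=\pm 2$ --- a point the paper's proof leaves implicit, since the KMS theorem also only yields density, never attainment, of $\pm 2$. What the paper's heavier machinery buys is reusability: the weak convergence of the empirical eigenvalue measures to $Lh^{-1}$ with $h(t)=2\cos 2\pi jt$ is exactly what is needed again in Theorem \ref{th:4.3} to derive the integral formula for the entropy rate, and it applies unchanged to Toeplitz patterns (such as $T_n(\bm{p})$ there) whose eigenvalues admit no closed form.
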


\begin{proof}
Denote by $L_n^j$ the upper triangular matrix whose
$(j+1)$-th upper diagonal entries are all equal to $1$ and
all the remaining entries are zero. Thus $L_n^j$ is defined
for $1 \leq j \leq n-1$. Then 
\begin{equation}\label{eq:3.1}
\Delta_n^j(A,B) = A \otimes I_n  + B \otimes (L_n^j +
(L_n^j)^T).
\end{equation}
Consider the spectral decomposition of the $n \times n$
symmetric matrix $L_n^j + (L_n^j)^T$:
\begin{equation}\label{eq:3.2}
L_n^j + (L_n^j)^T = \sum_{r=1}^n \lambda_{nr}
\braket{\psi_{nr}}{\psi_{nr}}
\end{equation}
where $\{\lambda_{nr}:\, r=1,2,\cdots,n\}$ are the
eigenvalues and $\{\ket{\psi_{nr}}:\, r=1,2,\cdots,n\}$ are
the corresponding orthonormal basis of eigenvectors for
$L_n^j + (L_n^j)^T$. Since each $L_n^j$ is a matrix with
operator norm equal to unity and therefore $L_n^j +
(L_n^j)^T$ has operator norm not exceeding $2$ it is clear
that 
\begin{equation}\label{eq:3.3}
|\lambda_{nr}| \le 2, \quad 1\leq r \leq n, ~~ n=1,2,\cdots.
\end{equation}
Equations (\ref{eq:3.1})--(\ref{eq:3.2}) imply 
\begin{equation}\label{eq:3.4}
\Delta_n^j(A,B) + \frac{\imath}{2} J_{2kn} = \sum_{r=1}^n
\left ( A + \lambda_{nr} B  + \frac{\imath}{2} J_{2k}
\right) \otimes \braket{\psi_{nr}}{\psi_{nr}}.
\end{equation}
Thus $\Delta_n^j(A,B)$ is a G-matrix if and only if $A+
\lambda_{nr}B $ is a G-matrix for each $r=1,2,\cdots,n$.
This together with (\ref{eq:3.3}) already proves the
sufficiency part of the theorem.

\par To prove necessity, we appeal to the theorem of  Kac,
Murdock and Szeg\"o \cite{MR0059482}. Consider the
probability distribution 
\[\mu_n = \frac{1}{n} \sum_{r=1}^n \delta_{\lambda_{nr}}\]
where $\lambda_{nr},\, r=1,2,\cdots,n$ are as in 
(\ref{eq:3.2}). The left hand side of (\ref{eq:3.2})
 is a Toeplitz matrix of order $n$ for each
$n$. Thus Kac, Murdock, Szeg\"o theorem implies that the
sequence $\{\mu_n\}$ converges weakly as $n \rightarrow
\infty$ to the probability measure $Lh^{-1}$ where $L$ is
the Lebesgue measure in the unit interval and $h(t) =
2 \cos2 \pi j t, \, t\in[0,1]$. 
This, in particular, implies that
$\{ \lambda_{nr}:\, r =1 , 2, \cdots, n,\, n=1,2,\cdots\}$
is dense in the interval $[-2,2]$. The proof of necessity is
now complete. 
\end{proof}

\begin{corollary}\label{cor:3.2}
Let $A,\,B_1,\,B_2,\,\cdots $ be real $2k \times 2k$
symmetric matrices satisfying the condition that $A+tB_j$ is
a G-matrix for every $j=1,2,\cdots$ and $t\in [-2,2]$.
Suppose $p_1,p_2,\cdots, $ is a probability distribution on
the set $\{1,2,\cdots\}$. Then the block Toeplitz matrix 
\[ \Sigma(A;p_1B_1,p_2B_2,\cdots) = \begin{bmatrix}
A & p_1B_1 & p_2B_2& \cdots & \cdots \\
p_1B_1 & A & p_1B_1 & p_2B_2 & \cdots \\
p_2B_2 & p_1B_1 & A & p_1B_1 & \cdots \\
\vdots &\vdots &\vdots  & \ddots & \ddots
\end{bmatrix}\]
is a stationary G-chain
\end{corollary}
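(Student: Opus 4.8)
The plan is to realize the block Toeplitz matrix $\Sigma(A; p_1B_1, p_2B_2,\cdots)$ as a convex mixture of the matrices $\Delta^j(A,B_j)$ already analysed in Theorem \ref{th:3.1}, and then to use the elementary fact that a nonnegative-weighted sum of nonnegative definite matrices is nonnegative definite. First I would observe that, because $\sum_{j} p_j = 1$, the diagonal blocks of $\sum_{j} p_j \Delta^j(A,B_j)$ add up to $A$, while the $(m,m+j)$-th off-diagonal block receives its only contribution, namely $p_j B_j$, from $\Delta^j(A,B_j)$. Hence $\Sigma(A; p_1B_1, p_2B_2,\cdots) = \sum_{j\ge 1} p_j\,\Delta^j(A,B_j)$, and this representation already exhibits the stationary block Toeplitz structure required by Definition \ref{def:1.4}, with symmetric off-diagonal blocks $p_jB_j$; so stationarity is immediate and only the G-chain inequalities remain.

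The substance is to verify the inequalities (\ref{eq:1.7}) for every truncation. Fix $n$ and denote by $\Sigma_n$ the restriction to the first $n$ row and column blocks. Since truncation is linear, $\Sigma_n = \sum_{j\ge 1} p_j\,\Delta_n^j(A,B_j)$, and here I would split the sum at $j=n$: for $1\le j\le n-1$ the term $\Delta_n^j(A,B_j)$ is the genuine truncation of (\ref{eq:3.1}), whereas for $j\ge n$ the shift is too large to place any off-diagonal block inside the first $n$ blocks, so $\Delta_n^j(A,B_j)=A\otimes I_n$. Adding $\frac{\imath}{2}J_{2kn}$ and redistributing it by means of $\sum_j p_j=1$, and using $J_{2kn}=J_{2k}\otimes I_n$, I obtain the exact (and genuinely finite) convex combination
\begin{equation*}
\Sigma_n + \frac{\imath}{2} J_{2kn} = \sum_{j=1}^{n-1} p_j \left( \Delta_n^j(A,B_j) + \frac{\imath}{2} J_{2kn} \right) + \left( \sum_{j\ge n} p_j \right) \left( A + \frac{\imath}{2} J_{2k} \right) \otimes I_n .
\end{equation*}

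It then remains to check nonnegativity of each summand. For $1\le j\le n-1$, the hypothesis that $A+tB_j$ is a G-matrix for all $t\in[-2,2]$ makes $\Delta^j(A,B_j)$ a G-chain by Theorem \ref{th:3.1}, so every truncation satisfies $\Delta_n^j(A,B_j)+\frac{\imath}{2}J_{2kn}\ge 0$. For the tail term, choosing $t=0$ in the same hypothesis shows that $A$ itself is a G-matrix, i.e.\ $A+\frac{\imath}{2}J_{2k}\ge 0$, whence $\left(A+\frac{\imath}{2}J_{2k}\right)\otimes I_n\ge 0$. A nonnegative-weighted sum of nonnegative definite matrices being nonnegative definite, I conclude $\Sigma_n+\frac{\imath}{2}J_{2kn}\ge 0$ for every $n$, which is exactly the G-chain condition (\ref{eq:1.7}).

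I do not expect a genuine obstacle here; the content is essentially the stability of the G-matrix inequality under convex combinations. The only point demanding care is the bookkeeping at the truncation: correctly isolating the shifts $j\ge n$ that fall outside the first $n$ blocks, and checking that after this splitting the weights still sum to one, so that the displayed identity is an exact convex decomposition rather than merely an inequality.
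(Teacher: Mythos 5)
Your proof is correct and is essentially the paper's own argument: both write $\Sigma(A;p_1B_1,p_2B_2,\cdots)=\sum_{j\ge 1}p_j\,\Delta^j(A,B_j)$ and conclude via Theorem \ref{th:3.1} together with the stability of the inequalities (\ref{eq:1.7}) under convex combinations. The paper compresses this to one line (``immediate from the theorem''), while your truncation bookkeeping for $j\ge n$ and the use of $t=0$ merely spell out the same convexity step explicitly.
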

\begin{proof}
This is immediate from the theorem because 
\[\Sigma(A;p_1B_1,p_2B_2,\cdots) = \sum_{j=1}^\infty p_j
\Delta^j (A,B_j)\]
and each $\Delta^j (A,B_j)$ is a G-chain.
\end{proof}

\section{Entropy rate of the stationary G-chain}

\par Suppose $\Sigma =\Sigma  (A,B_1,B_2,\cdots)$ is a
stationary G-chain. For any G-matrix $C$ denote by $S(C)$
the von Neumann entropy of a Gaussian state $\rho$ with
covariance matrix $C$. Let 
\begin{eqnarray*}
\Sigma_n &=& \Sigma(\{1,2,\cdots,n\}),\\
S_n &=& S(\Sigma_n).
\end{eqnarray*}
\begin{prop}\label{prop:4.1}
The sequences $\{S_n -S_{n-1}\}, ~ \left\{ \frac{1}{n} S_n
\right\}$ monotonically decrease to the same limit
$\bar{S}\ge 0$ as $n \rightarrow \infty$. Furthermore, $S_n
\ge S_{n-1}$ for all $n$.
\end{prop}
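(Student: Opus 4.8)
The plan is to reduce everything to the strong sub-additivity of von Neumann entropy combined with the translation invariance of the stationary chain, and then to a purely real-analytic statement about averages of a monotone sequence. Set $S_0=0$ and $d_n=S_n-S_{n-1}$ for $n\ge 1$. First I would establish that $\{d_n\}$ is non-increasing. Apply strong sub-additivity with the tripartition $A=\{1\}$, $B=\{2,\dots,n-1\}$, $C=\{n\}$ of the sites carrying $\rho(\{1,\dots,n\})$, which gives $S(\rho_{ABC})+S(\rho_B)\le S(\rho_{AB})+S(\rho_{BC})$. By stationarity, since $\rho(I)$ and $\rho(I+1)$ coincide, the marginals satisfy $S(\rho_{ABC})=S_n$, $S(\rho_{AB})=S_{n-1}$, $S(\rho_{BC})=S(\rho(\{2,\dots,n\}))=S_{n-1}$, and $S(\rho_B)=S(\rho(\{2,\dots,n-1\}))=S_{n-2}$. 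Hence $S_n+S_{n-2}\le 2S_{n-1}$, i.e. $d_n\le d_{n-1}$ for $n\ge 3$; the remaining case $n=2$ is ordinary sub-additivity $S_2\le 2S_1$. Thus $\{d_n\}$ decreases monotonically.

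Next I would record the elementary averaging fact. Writing $\tfrac{1}{n}S_n=\tfrac{1}{n}\sum_{k=1}^n d_k$, a direct computation gives $\tfrac{1}{n}S_n-\tfrac{1}{n-1}S_{n-1}=\tfrac{1}{n}\bigl(d_n-\tfrac{1}{n-1}S_{n-1}\bigr)$, and since $d_n$ is the smallest of $d_1,\dots,d_n$ it does not exceed their average $\tfrac{1}{n-1}\sum_{k=1}^{n-1}d_k=\tfrac{1}{n-1}S_{n-1}$. Therefore the right-hand side is nonpositive, so $\{\tfrac{1}{n}S_n\}$ is also non-increasing; by the Ces\`aro--Stolz theorem it converges to the same limit as $\{d_n\}$.

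The concluding step, and the one point requiring care, is to pin down that limit and to deduce $S_n\ge S_{n-1}$. Since $\{d_n\}$ is non-increasing it has a limit $\bar S\in[-\infty,\infty)$, and by the previous paragraph $\tfrac{1}{n}S_n\to\bar S$ as well (the Ces\`aro identity remains valid when the limit is $-\infty$). Because every $S_n$ is a von Neumann entropy we have $S_n\ge 0$, so $\tfrac{1}{n}S_n\ge 0$ for all $n$, which forces $\bar S\ge 0$; in particular $\bar S$ is finite. The decisive observation is then that a non-increasing sequence never drops below its limit, so $d_n\ge \bar S\ge 0$ for every $n$. This simultaneously gives the monotone convergence of $\{S_n-S_{n-1}\}$ and $\{\tfrac1n S_n\}$ to the common non-negative limit $\bar S$ and yields $S_n\ge S_{n-1}$ for all $n$, the latter emerging as a consequence of the non-negativity of the rate rather than by a direct comparison. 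The only genuine subtlety is this final inference: a naive attempt to prove $S_n\ge S_{n-1}$ directly fails because quantum conditional entropy can be negative, so one must route the inequality through the limit $\bar S$.
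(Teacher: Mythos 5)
Your proof is correct and takes essentially the same route as the paper's: strong sub-additivity plus stationarity gives that the increments $S_n-S_{n-1}$ are non-increasing, the Ces\`aro argument transfers this to $\tfrac{1}{n}S_n$, non-negativity of entropy pins the common limit $\bar S\ge 0$, and $S_n\ge S_{n-1}$ follows because a non-increasing sequence stays above its limit. Your treatment is slightly more explicit than the paper's (the averaging identity, the $n=2$ edge case via ordinary sub-additivity, and the care about a possible limit $-\infty$), but the underlying argument is identical.
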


\begin{proof}
Consider three Gaussian quantum systems $P,\,Q,\,R$ so that
$PQR$ is in the mean zero Gaussian state $\rho(\{1,2,\cdots,
n+1\})$, $Q$ in $\rho(\{2,\cdots, n\})$, $PQ$ in
$\rho(\{1,2,\cdots, n\})$ and $QR$ in $\rho(\{2,\cdots,
n+1\})$. Then using stationarity we have 
\begin{eqnarray*}
S(\rho(PQR)) &=& S_{n+1} \\
S(\rho(PQ)) &=& S_{n} \\
S(\rho(QR)) &=& S_{n} \\
S(\rho(Q)) &=& S_{n-1}.
\end{eqnarray*}
By the strong subadditivity property of entropy
\cite{MR0345558} we have 
\[S_{n+1} + S_{n-1} \le 2 S_n\]
or 
\[S_{n+1} - S_n \le S_n - S_{n-1}.\]
Since 
\[\frac{S_n}{n} =\frac{(S_n - S_{n-1})+(S_{n-1} - S_{n-2}) +
\cdots + (S_1 - S_0)}{n}\]
where $S_0$ is defined to be zero, it follows that $\frac{S_n}{n}$
decreases monotonically to a limit $\bar{S}\ge0$. This also
implies that $S_n - S_{n-1}$ cannot decrease to $-\infty$
and hence $S_n - S_{n-1}$ also decreases monotonically to
$\bar{S}$. This also shows that $S_n \ge S_{n-1}$ for all
$n$. 
\end{proof} 
\begin{defin}\label{def:4.1}
We denote the limit $\bar{S}$ in Proposition
\ref{prop:4.1} by $\bar{S}(\Sigma)$ and call it the 
\emph{entropy rate} of the stationary G-chain $\Sigma$.
\end{defin}

\begin{theorem}\label{th:4.1}
Let $\Sigma =\Sigma(A,B)$ be an exchangeable G-chain. Then 
\[\bar{S}(\Sigma) = S(A-B).\]
\end{theorem}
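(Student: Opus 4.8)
The plan is to obtain a closed formula for $S_n=S(\Sigma_n(A,B))$ and then pass to the limit via Proposition~\ref{prop:4.1}, which guarantees that $\bar S(\Sigma)=\lim_{n\to\infty}\frac1n S_n$ (an exchangeable chain is the stationary chain with every $B_j$ equal to $B$, so its entropy rate is defined). First I would reuse the structure of $\Sigma_n(A,B)$ already exhibited in the proof of Theorem~\ref{th:2.1}. Since $B=B^T$ holds for every G-chain and $I_n+N_n+N_n^T=n\braket{\psi_n}{\psi_n}$, one can write
\[
\Sigma_n(A,B)=A\otimes I_n+B\otimes(N_n+N_n^T)=(A-B)\otimes I_n+nB\otimes\braket{\psi_n}{\psi_n}.
\]
The two matrices $I_n$ and the rank-one projection $\braket{\psi_n}{\psi_n}$ are symmetric and commute, so a single orthogonal $O$ on $\mathbb{C}^n$ diagonalizes both simultaneously, with $\ket{\psi_n}$ as one eigenvector and its orthocomplement furnishing the others.

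The key point is that $I_{2k}\otimes O$ is a symplectic transformation for $J_{2kn}=J_{2k}\otimes I_n$, because $(I_{2k}\otimes O)^T(J_{2k}\otimes I_n)(I_{2k}\otimes O)=J_{2k}\otimes(O^TO)=J_{2kn}$. Conjugation by the associated Gaussian unitary preserves the von Neumann entropy and carries $\Sigma_n(A,B)$ to a block-diagonal covariance matrix: on $\mathrm{span}(\ket{\psi_n})$ the coefficients of $I_n$ and of the projection are both $1$, giving the block $(A-B)+nB=A+(n-1)B$, while on each of the remaining $n-1$ orthogonal directions the projection contributes $0$, giving the block $A-B$. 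Hence the Gaussian state $\rho(\{1,\dots,n\})$ is Gaussian-unitarily equivalent to a product of $n-1$ copies of the $k$-mode state with covariance $A-B$ and one $k$-mode state with covariance $A+(n-1)B$, so that
\[
S_n=(n-1)\,S(A-B)+S\bigl(A+(n-1)B\bigr).
\]

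It then remains to show that the last term is negligible after division by $n$. The $k$ symplectic eigenvalues of $A+(n-1)B$ are bounded above by its operator norm $\le\|A\|+(n-1)\|B\|=O(n)$, and the single-mode Gaussian entropy function grows only like $\log\nu$ in the symplectic eigenvalue $\nu$; therefore $S(A+(n-1)B)=O(\log n)$ and $\frac1n S(A+(n-1)B)\to0$. Consequently $\frac1n S_n=\frac{n-1}{n}S(A-B)+\frac1n S(A+(n-1)B)\to S(A-B)$, and by Proposition~\ref{prop:4.1} this limit is exactly $\bar S(\Sigma)$, proving $\bar S(\Sigma)=S(A-B)$. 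The two steps needing genuine care are (i) verifying that the block-diagonalizing map is symplectic, not merely orthogonal, so that it preserves entropy, and (ii) controlling the growth of $S(A+(n-1)B)$; the logarithmic bound is precisely what makes this diverging boundary term vanish in the Ces\`aro limit.
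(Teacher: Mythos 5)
Your proposal is correct, and its first half coincides with the paper's own argument: the same identity
\[
\Sigma_n(A,B)=(A-B)\otimes\left(I_n-\braket{\psi_n}{\psi_n}\right)+\left(A+(n-1)B\right)\otimes\braket{\psi_n}{\psi_n}
\]
from the proof of Theorem \ref{th:2.1} yields exactly the paper's formula (\ref{eq:4.1}), $S_n=(n-1)S(A-B)+S\bigl(A+(n-1)B\bigr)$; in fact you supply a justification the paper leaves implicit, namely that the block diagonalization is effected by $I_{2k}\otimes O$ with $O$ real orthogonal (real, since $\ket{\psi_n}$ is real --- so your ``$\mathbb{C}^n$'' should read $\mathbb{R}^n$), which is symplectic for $J_{2kn}=J_{2k}\otimes I_n$ and hence preserves the symplectic spectrum and the entropy. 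Where you genuinely diverge is in killing the boundary term $\frac{1}{n}S(A+(n-1)B)$. The paper represents $\rho^{A+(n-1)B}$ as the mixture $\int W(\bm{\xi})\rho^A W(\bm{\xi})^\dag\phi(\bm{\xi})\,\dd\bm{\xi}$ with $\phi$ the Gaussian density of covariance $(n-1)B$, then invokes Proposition 6.2 of \cite{MR1230389} together with the Shannon differential entropy of a Gaussian to get $S(A+(n-1)B)\le S(A)+k\log 2\pi e+\frac{1}{2}\log\bigl((n-1)^{2k}\det B\bigr)=O(\log n)$. You instead bound the symplectic eigenvalues of $C=A+(n-1)B$ by the operator norm ($\nu_{\max}\le\|JC\|\le\|C\|$ since $J$ is orthogonal) and use the logarithmic growth of the one-mode entropy function $g(\nu)$, obtaining $S(A+(n-1)B)\le k\,g\bigl(\|A\|+(n-1)\|B\|\bigr)=O(\log n)$ directly. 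Your route is more elementary and self-contained, and it covers without modification the degenerate case $\det B=0$ (e.g. $B$ of low rank but nonzero), where the paper's density $\phi$ and the term $\frac{1}{2}\log\det\bigl((n-1)B\bigr)$ only make sense after restricting to the support of $B$; the paper's route, in exchange, avoids any discussion of symplectic spectra beyond additivity and exhibits a generally useful bound on the entropy of Gaussian mixtures. Both arguments conclude identically via Proposition \ref{prop:4.1}.
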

\begin{proof}
If $C,\,D$ are two G-matrices so is $C\oplus D$ and $S(C
\oplus D) =S(C) +S(D)$. By Theorem \ref{th:2.1} and equation
(\ref{eq:2.1}) we have the identity
\[\Sigma_n(A,B)=(A-B) \otimes (I_n - \braket{\psi_n}{\psi_n})
+(A+\overline{n-1}B) \otimes \braket{\psi_n}{\psi_n}\]
where we have adopted the notations in the proof of Theorem
\ref{th:2.1}. Thus 
\begin{equation}\label{eq:4.1}
S_n = S(\Sigma_n(A,B)) = (n-1) S(A-B) + S(A+\overline{n-1}B).
\end{equation}
Denote by $\rho^A$ the zero mean Gaussian state with
covariance matrix $A$ for any G-matrix $A$. Thus we have
\[\rho^{A+\overline{n-1}B} =\int_{\mathbb{R}^{2k}} W(\bm{\xi})
\rho^A  W(\bm{\xi})^\dag \phi( \bm{\xi})\,
\dd\bm{\xi}\]
where $\bm{\xi} = \bm{\xi}_1\oplus \bm{\xi}_2$, $W(\bm{\xi})$
is Weyl or displacement operator at $\bm{\xi}_1 + \imath
\bm{\xi}_2$ and $\phi(\bm{\xi})$ is the Gaussian density
function with mean zero and covariance matrix $(n-1)B$. By
Proposition 6.2 of \cite{MR1230389} we have 
\begin{eqnarray}
S(A+\overline{n-1}B) &=& S(\rho^{A+\overline{n-1}B})
\nonumber\\
&\le & \int S(A)\phi(\bm{\xi})\,\dd\bm{\xi}+H(\phi)
\label{eq:4.2}
\end{eqnarray}
where $H(\phi)$ is the Shannon differential entropy of the
density function $\phi$. Since (by \cite{coverthomas}) 
\begin{equation}\label{eq:4.3}
H(\phi) = k \log 2\pi e +\frac{1}{2} \log
\det(\overline{n-1} B)
\end{equation}
it follows from (\ref{eq:4.1})--(\ref{eq:4.3}) that 
\begin{eqnarray*} 
\left| \frac{S_n}{n}-\frac{n-1}{n}S(A-B)\right| &\le &
\frac{S(A)}{n} +\frac{k}{n} \log 2\pi e +\frac{1}{2n}
\log(n-1)^{2k} \det B \\
&\le & \frac{1}{n} \left[ S(A)+ k \log 2\pi e + \frac{1}{2}
\log \det B \right] +\frac{k}{n} \log(n-1).
\end{eqnarray*}
Letting $n \rightarrow \infty$ we get
\[\bar{S}(\Sigma)=S(A-B).\] 
\end{proof}
\begin{theorem}\label{th:4.3}
Let $p_1,p_2,\cdots$ be a probability distribution over
$\{1,2,3,\cdots\}$, and let $A$ and $B$ be $2k \times 2k$
symmetric real matrices satisfying the condition that $A+tB$
is a G-matrix for every $t\in[-2,2]$. Let $\Sigma$ be the
stationary G-chain defined by the infinite block Toeplitz matrix 
\[ \Sigma = \begin{bmatrix}
A & p_1B & p_2B& \cdots & \cdots \\
p_1B & A & p_1B & p_2B & \cdots \\
p_2B & p_1B & A & p_1B & \cdots \\
\vdots &\vdots &\vdots  & \ddots & \ddots
\end{bmatrix}.\]
Then the entropy rate of $\Sigma$ is given by 
\[\bar{S}(\Sigma) =\int_0^1 S(A+h(s)B)\,\dd s\]
where 
\[h(s) =2 \sum_{j=1}^\infty p_j \cos 2\pi j s, \quad
s\in[0,1].\]
\end{theorem}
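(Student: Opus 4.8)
The plan is to reduce the computation of $S_n$ to a sum indexed by the spectrum of a scalar Toeplitz matrix, and then pass to the limit via the Kac--Murdock--Szeg\"o theorem already used in Theorem \ref{th:3.1}. First I would write the $n$-block truncation as
\[\Sigma_n = A \otimes I_n + B \otimes T_n,\]
where $T_n$ is the $n\times n$ real symmetric Toeplitz matrix with $(r,s)$-entry $p_{|r-s|}$ for $r\ne s$ and zero on the diagonal. Diagonalising $T_n = \sum_{r=1}^n \mu_{nr}\braket{\psi_{nr}}{\psi_{nr}}$ by a real orthogonal matrix $O$ of eigenvectors, and using that $J_{2kn} = J_{2k}\otimes I_n$ is left invariant by $I_{2k}\otimes O$, I obtain
\[\Sigma_n + \tfrac{\imath}{2}J_{2kn} \;\cong\; \bigoplus_{r=1}^n \Bigl(A + \mu_{nr}B + \tfrac{\imath}{2}J_{2k}\Bigr),\]
a symplectically orthogonal direct sum of the $2k\times 2k$ matrices $A+\mu_{nr}B$, each of which is a G-matrix because $|\mu_{nr}|\le 2$ (the operator norm of $T_n$ is at most $2\sum_j p_j = 2$) and $A+tB$ is a G-matrix for every $t\in[-2,2]$ by hypothesis.

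Since the von Neumann entropy of a mean zero Gaussian state is additive over such direct sums, this yields the key identity
\[S_n = \sum_{r=1}^n S(A+\mu_{nr}B), \qquad\text{so}\qquad \frac{1}{n}S_n = \int S(A+tB)\,\dd\mu_n(t),\]
where $\mu_n = \tfrac1n\sum_{r=1}^n \delta_{\mu_{nr}}$ is the empirical spectral distribution of $T_n$. Now $T_n$ is Toeplitz with generating symbol $f(s) = \sum_{m\ne 0} p_{|m|} e^{2\pi\imath m s} = h(s)$, whose values lie in $[-2,2]$, so the Kac--Murdock--Szeg\"o theorem gives, for every continuous $g$ on $[-2,2]$,
\[\lim_{n\to\infty}\frac{1}{n}\sum_{r=1}^n g(\mu_{nr}) = \int_0^1 g(h(s))\,\dd s.\]
Applying this with $g(t) = S(A+tB)$ and combining with Proposition \ref{prop:4.1}, which already guarantees $\tfrac1n S_n \to \bar{S}(\Sigma)$, produces the claimed integral formula.

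I expect the delicate points to be twofold. The algebraic reduction to a direct sum requires that the orthogonal change of basis in the index factor genuinely preserve the symplectic form, so that the entropy is additive across the blocks; this is exactly where the tensor splitting $J_{2kn}=J_{2k}\otimes I_n$ is essential, and it is the heart of the argument. The analytic obstacle is the continuity of $g(t)=S(A+tB)$ on the \emph{closed} interval $[-2,2]$, including the endpoints where a symplectic eigenvalue of $A\pm 2B$ may reach its minimal value $\tfrac12$. Since the symplectic eigenvalues depend continuously on the matrix entries, and the one-mode entropy function $\nu\mapsto(\nu+\tfrac12)\log(\nu+\tfrac12)-(\nu-\tfrac12)\log(\nu-\tfrac12)$ extends continuously to $\nu=\tfrac12$ with value $0$, the function $g$ is continuous and bounded on $[-2,2]$, which is precisely what the Szeg\"o limit theorem requires.
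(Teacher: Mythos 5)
Your proposal is correct and follows essentially the same route as the paper's own proof: the decomposition $\Sigma_n = A\otimes I_n + B\otimes T_n(\bm{p})$, diagonalization of the Toeplitz factor to write $\tfrac{1}{n}S_n$ as an integral against the empirical spectral measure, and the Kac--Murdock--Szeg\"o weak-convergence limit with $g(t)=S(A+tB)$ continuous on $[-2,2]$. Your added justifications---that $I_{2k}\otimes O$ preserves $J_{2k}\otimes I_n$ so entropy is additive across the blocks, and that $S(A+tB)$ remains continuous even where symplectic eigenvalues hit $\tfrac12$---are points the paper asserts without elaboration, so they strengthen rather than alter the argument.
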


\begin{proof}
We can express $\Sigma_n$ as 
\[\Sigma_n =A \otimes I_n +B \otimes T_n(\bm{p})\]
where $\Sigma_n$ is $\Sigma$ restricted to its first $n$ row
and column blocks and $T_n(\bm{p})$ is the Toeplitz matrix
given by 
\[T_n(\bm{p}) = \begin{bmatrix}
0 & p_1 & p_2 & \cdots & p_{n-1} \\
p_1 & 0 & p_1 & \cdots & p_{n-2}\\
p_2 & p_1 & 0 & \cdots & p_{n-3} \\
\vdots & \cdots & \vdots & \ddots & \vdots\\
p_{n-1} & p_{n-2} & p_{n-3} & \cdots & 0
\end{bmatrix}.\]
Let $\lambda_{n1},\lambda_{n2},\cdots,\lambda_{nn}$ be the
eigenvalues of $T_n(\bm{p})$ and let $\ket{\psi_{n1}},
\ket{\psi_{n2}}, \cdots , \ket{\psi_{nn}}$ the corresponding
eigenvectors constituting an orthonormal basis for
$\mathbb{R}^n$ so that 
\[\Sigma_n =\sum_{j=1}^n (A+\lambda_{nj}B) \otimes
\braket{\psi_{nj}}{\psi_{nj}}.\]
This shows that 
\begin{eqnarray*}
\frac{1}{n} S(\Sigma_n) &=& \frac{1}{n} \sum_{j=1}^n
S(A+\lambda_{nj}B) \\
&=& \int S(A+sB)\,\dd \mu_n(s),
\end{eqnarray*}
where $\mu_n$ is the probability measure defined by 
\[\mu_n =\frac{1}{n} \sum_{j=1}^n \delta_{\lambda_{nj}}.\]
By Kac-Murdock-Szeg\"o theorem $\mu_n$ converges weakly as
$n \rightarrow \infty$ to the distribution $Lh^{-1}$ where
$L$ denotes the Lebesgue measure in $[0,1]$ and 
\[h(s) =2 \sum_{j=1}^\infty p_j \cos 2\pi j s.\]
Note that $\|T_n(\bm{p})\| \le 2$ and the eigenvalues
$\lambda_{nj}$ lie in the interval $[-2,2]$. Furthermore,
the symplectic spectrum of $A+sB$  is a continuous function
of $s$ and hence the entropy $S(A+sB)$ is a continuous
function of $s$ in $[-2,2]$. Thus 
\begin{eqnarray*}
\lim_{n\rightarrow \infty} \frac{1}{n} S(\Sigma_n) &=&
\int_{-2}^2 S(A+sB) Lh^{-1}(\dd s) \\
&=& \int_0^1 S(A+h(s)B)\,\dd s.
\end{eqnarray*}
\end{proof}

\section{Entanglement in a stationary G-chain}

\par We have already noted in Corollary \ref{cor:2.1} that
in any exchangeable G-chain $\Sigma(A,B)$ of order $k$ all
the underlying Gaussian states $\rho(I)$ with $\# I \ge 2$
are separable. In the class of examples of stationary
G-chains in Corollary \ref{cor:3.2} it is natural to examine
the presence of entanglement. It would be particularly
interesting to construct an example of a stationary G-chain
for which the underlying Gaussian state $\rho(I\cup (I+n))$
with marginals $\rho(I)$ and $\rho(I+n)=\rho(I)$ is
entangled for arbitrarily large $n$. That would imply the
preservation of entanglement after an arbitrarily large
lapse of time. For such problems we do not have any answer.
However, we shall examine the special case of a stationary G-chain of
order $1$ where the matrices $A,\,B_1,\,B_2,\, \cdots$  in
Corollary \ref{cor:3.2} are given by 
\begin{eqnarray*}
A &=& \lambda I_2 \\
B_j &=& B = b\begin{bmatrix} 1 & 0 \\ 0 & -1 \end{bmatrix}
\quad j=1,2,\cdots,
\end{eqnarray*}
where $\lambda$ and $b$ are positive scalars with $\lambda
> \frac{1}{2}$. We start with two elementary lemmas. Let
\begin{equation}\label{eq:5.1}
 \Sigma = \begin{bmatrix}
\lambda I_2 & p_1B & p_2B& \cdots & \cdots \\
p_1B & \lambda I_2& p_1B & p_2B & \cdots \\
p_2B & p_1B & \lambda I_2 & p_1B & \cdots \\
\vdots &\vdots &\vdots  & \ddots & \ddots
\end{bmatrix}
\end{equation}
where $p_1,p_2,\cdots$ is a probability distribution over
$\{1,2,\cdots\}$ and $\lambda,\,B$ as above.
\begin{lemma}\label{lem:5.1}
The infinite block matrix $\Sigma$ in (\ref{eq:5.1}) is a
stationary G-chain of order one if $b< \frac{1}{2} \left(
\lambda^2 -\frac{1}{4} \right)^{\frac{1}{2}}$.
\end{lemma}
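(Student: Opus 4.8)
The plan is to reduce the assertion to the hypothesis of Corollary \ref{cor:3.2}. Since the off-diagonal matrices all coincide here ($B_j = B$ for every $j$), that corollary tells us $\Sigma$ is a stationary G-chain as soon as $A + tB$ is a G-matrix for every $t \in [-2,2]$. Thus the whole problem collapses to verifying a one-parameter family of $2 \times 2$ G-matrix conditions, and no genuinely infinite-dimensional reasoning is needed once the corollary is invoked.

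First I would write the family out explicitly. With $A = \lambda I_2$ and $B = b\,\mathrm{diag}(1,-1)$ we have
\[
A + tB = \begin{bmatrix} \lambda + tb & 0 \\ 0 & \lambda - tb \end{bmatrix},
\]
a real diagonal, hence symmetric, matrix. By Definition \ref{def:1.1} this is a G-matrix precisely when the Hermitian matrix $A + tB + \tfrac{\imath}{2}J_2$ is nonnegative definite (positivity of $A+tB$ itself then follows, since $v^T J_2 v = 0$ on real $v$). For a $2\times 2$ Hermitian matrix nonnegativity is equivalent to its two diagonal entries and its determinant all being nonnegative, so the inequalities to check are
\[
\lambda + tb \ge 0, \qquad \lambda - tb \ge 0, \qquad \lambda^2 - t^2 b^2 - \tfrac14 \ge 0.
\]

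Next I would locate the worst case over $t \in [-2,2]$. The determinant $\lambda^2 - t^2 b^2 - \tfrac14$ is decreasing in $t^2$, so its minimum on the interval is attained at $\lvert t\rvert = 2$, giving the single scalar requirement $\lambda^2 - 4b^2 - \tfrac14 \ge 0$, i.e. $b \le \tfrac12 (\lambda^2 - \tfrac14)^{1/2}$; likewise the two diagonal entries are smallest at $t = \mp 2$, giving $\lambda - 2b \ge 0$. The hypothesis $b < \tfrac12(\lambda^2-\tfrac14)^{1/2}$ makes the determinant condition hold strictly for every $t$ in the interval.

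The one point that needs a short argument — and is really the only obstacle in an otherwise direct computation — is that this single hypothesis simultaneously settles the diagonal (positivity) constraints. Here I would note that $\tfrac12(\lambda^2 - \tfrac14)^{1/2} < \tfrac12\lambda$ because $\lambda > 0$, so $b < \tfrac12(\lambda^2-\tfrac14)^{1/2}$ forces $b < \lambda/2$ and hence $\lambda - 2b > 0$; consequently $\lambda \pm tb > 0$ for all $t \in [-2,2]$. All three inequalities therefore hold throughout $[-2,2]$, so $A + tB$ is a G-matrix for every such $t$, and Corollary \ref{cor:3.2} then finishes the proof.
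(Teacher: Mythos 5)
Your proposal is correct and follows essentially the same route as the paper: invoke Corollary \ref{cor:3.2} to reduce the claim to verifying that $\lambda I_2 + tB$ is a G-matrix for all $t \in [-2,2]$, then check the resulting $2\times 2$ Hermitian inequality $\begin{bmatrix} \lambda+tb & \frac{\imath}{2} \\ -\frac{\imath}{2} & \lambda-tb \end{bmatrix} \ge 0$. The only difference is that the paper dismisses this verification with ``clearly,'' whereas you spell out the determinant and diagonal-entry conditions, identify the worst case $|t|=2$, and note that the determinant bound already forces $b < \lambda/2$ -- a welcome filling-in of the omitted details.
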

\begin{proof}
By Corollary \ref{cor:3.2}, $\Sigma$ is a stationary G-chain
of order one if $\lambda I_2 +t B$ is a G-matrix for all
$t\in [-2,2]$. This is fulfilled if the matrix inequalities 
\[\begin{bmatrix} 
\lambda+tb & \frac{\imath}{2} \\
-\frac{\imath}{2} & \lambda- tb
\end{bmatrix}>0, \quad  t \in [-2,2]
\]
hold. Clearly this is satisfied if $b < \frac{1}{2} \left(
\lambda^2 -\frac{1}{4} \right)^{\frac{1}{2}}$.
\end{proof}
\begin{lemma}\label{lem:5.2}
Let $\lambda > \frac{1}{2},~c>0$. Then the matrix
\[\Gamma = \begin{bmatrix}
\lambda & 0 & c & 0 \\
0 & \lambda & 0 & -c \\
c & 0 & \lambda & 0 \\
0 & -c & 0 & \lambda \end{bmatrix}\]
is the covariance matrix of an entangled $2$-mode Gaussian
state if 
\[\lambda - \frac{1}{2} < c <  \left(\lambda^2 -\frac{1}{4}
\right)^{\frac{1}{2}}.\]
\end{lemma}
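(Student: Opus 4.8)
The plan is to decide entanglement by the positive partial transpose (PPT) criterion, which for two-mode Gaussian states is both necessary and sufficient for separability. Concretely, a mean-zero two-mode Gaussian state with covariance matrix $\Gamma$ is entangled exactly when $\Gamma$ is itself a G-matrix (so that the state exists) while its partial transpose $\tilde{\Gamma} := \Lambda\,\Gamma\,\Lambda$, with $\Lambda = \mathrm{diag}(1,1,1,-1)$ implementing the momentum reversal $p_2 \mapsto -p_2$ on the second mode, fails the inequality $\tilde{\Gamma} + \tfrac{\imath}{2} J_4 \ge 0$. Thus I would split the proof into showing (i) that $\Gamma + \tfrac{\imath}{2} J_4 \ge 0$ holds under the stated hypotheses, and (ii) that $\tilde{\Gamma} + \tfrac{\imath}{2} J_4 \ge 0$ fails.

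For both parts I would pass to symplectic eigenvalues, recalling that a real symmetric positive matrix is a G-matrix precisely when each of its symplectic eigenvalues is at least $\tfrac12$, and that these symplectic eigenvalues are the moduli of the eigenvalues of $J_4$ times the matrix. The computation is pleasantly degenerate for $\Gamma$: a direct multiplication shows $(J_4\Gamma)^2 = (c^2-\lambda^2)I_4$, so $\Gamma$ has the single, doubly degenerate symplectic eigenvalue $\nu = \sqrt{\lambda^2 - c^2}$, and the requirement $\nu \ge \tfrac12$ reads $c \le \sqrt{\lambda^2 - \tfrac14}$. Hence the upper bound $c < \sqrt{\lambda^2 - \tfrac14}$ settles part (i), that $\Gamma$ is a genuine covariance matrix.

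For part (ii) I would repeat the computation for $\tilde{\Gamma}$. Here $(J_4\tilde{\Gamma})^2$ is no longer a scalar matrix, but it block-diagonalizes on the coordinate pairs $\{e_1,e_3\}$ and $\{e_2,e_4\}$, and on each block its eigenvalues come out as $-(\lambda-c)^2$ and $-(\lambda+c)^2$; the symplectic eigenvalues of $\tilde{\Gamma}$ are therefore $|\lambda-c|$ and $\lambda+c$. Since $\lambda+c > \tfrac12$ always holds under the hypotheses, the PPT condition for $\tilde{\Gamma}$ collapses to $|\lambda-c| \ge \tfrac12$, and its failure, namely entanglement, is exactly $|\lambda-c| < \tfrac12$, equivalently $\lambda - \tfrac12 < c < \lambda + \tfrac12$.

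Finally I would combine the two inequalities. Because $\lambda > \tfrac12$ forces $\lambda - \tfrac12 < \sqrt{\lambda^2 - \tfrac14} < \lambda + \tfrac12$ (the middle quantity being the geometric mean of the outer two), the validity bound $c < \sqrt{\lambda^2 - \tfrac14}$ already implies $c < \lambda + \tfrac12$, so the entangled region is cut out precisely by $\lambda - \tfrac12 < c < \sqrt{\lambda^2 - \tfrac14}$, which is the asserted range and is nonempty exactly when $\lambda > \tfrac12$. The only real work is the two spectral computations; the genuinely convenient point, and the step I would verify most carefully, is the identity $(J_4\Gamma)^2 = (c^2-\lambda^2)I_4$ together with the analogous block reduction for $\tilde{\Gamma}$, since once the symplectic spectra are in hand everything else is bookkeeping.
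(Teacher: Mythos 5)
Your proof is correct, and it rests on the same key external result as the paper's: Simon's theorem that a two-mode Gaussian state is separable if and only if its partial transpose again satisfies the uncertainty relation. Where the two arguments part ways is in how the two matrix conditions are verified. The paper applies Simon's criterion in Schur-complement form: inequality (\ref{eq:5.2}) (validity of $\Gamma$) is shown to hold when $c^2 < \lambda^2 - \frac{1}{4}$, and inequality (\ref{eq:5.3}) is reduced to the $2\times 2$ matrix condition (\ref{eq:5.4}), whose determinant is forced negative, giving $c > \lambda - \frac{1}{2}$. You instead compute symplectic spectra: the identity $(J_4\Gamma)^2 = (c^2-\lambda^2)I_4$ (which checks out) gives the doubly degenerate symplectic eigenvalue $\sqrt{\lambda^2-c^2}$, so validity reads $c \le \sqrt{\lambda^2-\frac{1}{4}}$, while the block computation for the partial transpose $\Lambda\Gamma\Lambda$, $\Lambda = \mathrm{diag}(1,1,1,-1)$, gives symplectic spectrum $\left\{|\lambda-c|,\,\lambda+c\right\}$, so the PPT condition fails exactly when $|\lambda-c| < \frac{1}{2}$. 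Your route yields somewhat more than the lemma asks: it characterizes the entangled states exactly within the validity region (entangled if and only if $\lambda-\frac{1}{2} < c < \lambda+\frac{1}{2}$), and the geometric-mean inequality $\lambda-\frac{1}{2} < \sqrt{\lambda^2-\frac{1}{4}} < \lambda+\frac{1}{2}$ makes transparent why the upper validity bound is the binding constraint; the paper's route establishes only the sufficiency claimed, but does so with nothing beyond $2\times 2$ determinants. One pedantic caveat: since validity of a covariance matrix is the non-strict condition $\nu \ge \frac{1}{2}$, the boundary point $c = \sqrt{\lambda^2-\frac{1}{4}}$ also corresponds to a valid entangled state, so your claim that the entangled region is cut out \emph{precisely} by the stated interval mildly overstates; this has no bearing on the lemma itself, which asserts only the ``if'' direction.
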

\begin{proof}
By Simon's criterion \cite{PhysRevLett.84.2726}, $\Gamma$
has the required property if the following two matrix
inequalities hold:
\begin{eqnarray}
\begin{bmatrix} 
\lambda & \frac{\imath}{2} \\
-\frac{\imath}{2} & \lambda
\end{bmatrix} - c^2 \begin{bmatrix} 1 & 0 \\ 0 & -1
\end{bmatrix} \begin{bmatrix} 
\lambda & \frac{\imath}{2} \\
-\frac{\imath}{2} & \lambda
\end{bmatrix}^{-1} \begin{bmatrix} 1 & 0 \\ 0 & 1
\end{bmatrix} &>& 0, \label{eq:5.2} \\
\begin{bmatrix} 
\lambda & \frac{\imath}{2} \\
-\frac{\imath}{2} & \lambda
\end{bmatrix} - c^2 \begin{bmatrix} 1 & 0 \\ 0 & -1
\end{bmatrix} \begin{bmatrix} 
\lambda & -\frac{\imath}{2} \\
\frac{\imath}{2} & \lambda
\end{bmatrix}^{-1} \begin{bmatrix} 1 & 0 \\ 0 & 1
\end{bmatrix} &\not\ge& 0. \label{eq:5.3}
\end{eqnarray}
Simple algebra shows that (\ref{eq:5.2}) holds whenever $c^2
< \lambda^2 - \frac{1}{4}$. Inequality (\ref{eq:5.3})
reduces to 
\begin{equation}\label{eq:5.4}
\begin{bmatrix}
(1-d^2) \lambda & \frac{\imath}{2}(1+d^2) \\
-\frac{\imath}{2}(1+d^2) & (1-d^2) \lambda
\end{bmatrix} \not\ge 0
\end{equation}
where 
\begin{equation}\label{eq:5.5}
d = \frac{c}{\sqrt{\lambda^2 -\frac{1}{4}}}.
\end{equation} 
We now choose $d<1$ such that the determinant of the matrix
on the left hand side of (\ref{eq:5.4}) is strictly
negative. This finally leads to the inequality $c>\lambda
-\frac{1}{2}$. Thus the inequality $\lambda - \frac{1}{2} <
c < \sqrt{\lambda^2 -\frac{1}{4}}$ is sufficient to ensure
entanglement. 
\end{proof}
\begin{prop}\label{prop:5.3}
Let $\frac{1}{2} < \lambda < \frac{5}{6}$, $\lambda -
\frac{1}{2} < b < \frac{1}{2} \sqrt{\lambda^2
-\frac{1}{4}}$. Suppose $p_j b > \lambda - \frac{1}{2}$ for
some $j$. Then the $2$-mode Gaussian state $\rho(\{1,j\})$
determined by the stationary G-chain $\Sigma$ defined by
(\ref{eq:5.1}) is entangled.
\end{prop}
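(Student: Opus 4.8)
The plan is to reduce the proposition directly to Lemmas \ref{lem:5.1} and \ref{lem:5.2}, which together do essentially all the work. First I would invoke Lemma \ref{lem:5.1}: since the hypotheses include $b < \frac{1}{2}\sqrt{\lambda^2 - \frac{1}{4}}$ (and $\lambda > \frac{1}{2}$), that lemma guarantees that $\Sigma$ in (\ref{eq:5.1}) is a genuine stationary G-chain of order one. Consequently every finite marginal is a legitimate Gaussian state, and in particular the two-mode state $\rho(\{1,j\})$ is well defined, its covariance matrix being the $4\times 4$ principal submatrix of $\Sigma$ obtained by keeping the block rows and columns indexed by $1$ and $j$.

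Next I would write this submatrix out explicitly. Using $B = b\,\mathrm{diag}(1,-1)$, the relevant off-diagonal block equals $p_m B$, where $p_m$ is the probability weight attached to the gap between the two chosen indices, and $B^T=B$. Ordering the variables as $(q_1,p_1,q_2,p_2)$, the covariance matrix of $\rho(\{1,j\})$ therefore takes the form
\[
\begin{bmatrix}
\lambda & 0 & p_m b & 0 \\
0 & \lambda & 0 & -p_m b \\
p_m b & 0 & \lambda & 0 \\
0 & -p_m b & 0 & \lambda
\end{bmatrix},
\]
which is precisely the matrix $\Gamma$ of Lemma \ref{lem:5.2} with $c = p_m b$. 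Hence it remains only to verify that $c$ falls inside the entangling window $\lambda - \frac{1}{2} < c < \sqrt{\lambda^2 - \frac{1}{4}}$ demanded by that lemma.

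For the lower bound I would simply quote the standing hypothesis $p_j b > \lambda - \frac{1}{2}$, which (after matching the gap index $m$ to $j$) gives $c > \lambda - \frac{1}{2}$. For the upper bound the key observation is that $p_m \le 1$ because $(p_1,p_2,\dots)$ is a probability distribution, so
\[
c = p_m b \le b < \tfrac{1}{2}\sqrt{\lambda^2 - \tfrac{1}{4}} < \sqrt{\lambda^2 - \tfrac{1}{4}},
\]
the strict inequalities coming from the assumed range of $b$. Thus both bounds of Lemma \ref{lem:5.2} hold, and that lemma immediately yields that $\rho(\{1,j\})$ is entangled.

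The proof is therefore short once the two lemmas are in place, and there is no serious obstacle; the real analytic content lives in Lemma \ref{lem:5.2} (Simon's criterion). The one point worth flagging is the interplay of the parameter constraints: the upper end $b < \frac{1}{2}\sqrt{\lambda^2-\frac14}$ does double duty, securing both the G-chain property via Lemma \ref{lem:5.1} and the entanglement upper bound $c < \sqrt{\lambda^2-\frac14}$, while the lower end $b > \lambda - \frac{1}{2}$ is what makes the requirement $p_j b > \lambda - \frac12$ attainable at all for some $j$ (since $p_j \le 1$). Finally, the restriction $\lambda < \frac{5}{6}$ is exactly the condition under which the interval $\left(\lambda - \tfrac12,\,\tfrac12\sqrt{\lambda^2 - \tfrac14}\right)$ for $b$ is nonempty: squaring $\lambda - \frac12 < \frac12\sqrt{\lambda^2-\frac14}$ and dividing by $\lambda-\frac12>0$ gives $3\lambda < \frac52$. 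I would include a short remark confirming that the hypotheses are consistent rather than vacuous.
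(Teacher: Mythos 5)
Your proposal is correct and follows essentially the same route as the paper's own proof: invoke Lemma \ref{lem:5.1} for the G-chain property, identify the covariance matrix of $\rho(\{1,j\})$ as the matrix $\Gamma$ of Lemma \ref{lem:5.2} with $c = p_j b$, check that $c$ lies in the entangling window (lower bound from the hypothesis, upper bound from $p_j \le 1$), and conclude via Lemma \ref{lem:5.2}. Your explicit verification that $\lambda < \frac{5}{6}$ is exactly the nonemptiness condition for the $b$-interval, and your attention to the gap index in the off-diagonal block, are slightly more careful than the paper's wording but do not constitute a different argument.
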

\begin{proof}
Since $\frac{1}{2} < \lambda < \frac{5}{6}$, the interval
$\left( \lambda -\frac{1}{2} , \frac{1}{2}\left(\lambda^2
-\frac{1}{4} \right)^{\frac{1}{2}} \right)$ is open and
nonempty. Thus it
is possible to choose $b$ in this interval. By Lemma
\ref{lem:5.1} the block matrix $\Sigma$ in (\ref{eq:5.1}) is
a stationary G-chain of order one. The covariance matrix of
the $2$-mode state $\rho(\{1,j\})$ defined by the G-chain
$\Sigma$ is equal to 
\[ \begin{bmatrix}
\lambda & 0 & p_jb & 0 \\
0 & \lambda & 0 & -p_jb \\
p_jb & 0 & \lambda & 0 \\
0 & -p_jb & 0 & \lambda \end{bmatrix}\]
where by hypothesis 
\[\lambda -\frac{1}{2} < p_j b < \left( \lambda^2
-\frac{1}{4}\right)^{\frac{1}{2}}.\]
By Lemma \ref{lem:5.2} it follows that $\rho(\{1,j\})$ is
entangled. 
\end{proof}
\begin{rem}
\emph{It follows from Proposition \ref{prop:5.3} that for
$\frac{1}{2} < \lambda < \frac{5}{6}, ~ \lambda-\frac{1}{2} <
b < \frac{1}{2} \left(\lambda^2 -\frac{1}{4}
\right)^{\frac{1}{2}}$ the matrix 
\[\Sigma = \left[
\begin{array}{rrrrrrrrrrr}
\lambda & 0 &  b & 0 & 0 & 0 & \cdot & \cdot & \cdot &
\cdots \cdots\\
0 & \lambda & 0 & -b & 0 & 0 & \cdot & \cdot & \cdot &
\cdots\cdots\\
b & 0 & \lambda & 0 & b & 0 & 0 & 0 & \cdot & \cdots \cdots
\\
0 & -b & 0 & \lambda & 0 & -b & 0 & 0 & \cdot & \cdots
\cdots\\
0 & 0 & b & 0 & \lambda & 0 & b & 0 & \cdot & \cdots
\cdots\\
0 & 0 & 0 & -b & 0 & \lambda & 0 & -b & \cdot & \cdots
\cdots \\
\vdots & \vdots &\vdots &\vdots &\vdots &\vdots &\vdots
&\vdots &\ddots &\cdots\cdots \\
\vdots & \vdots &\vdots &\vdots &\vdots &\vdots &\vdots
&\vdots &\vdots &\cdots\cdots 
\end{array} \right]\]
(which is $2 \times 2$ block tridiagonal)  is a stationary
G-chain in which $\rho(\{1,2\}) = \rho(\{n , n+1\}), ~n
=2,3,\cdots$ is entangled. In this example $\rho(\{n, n
+j\})$, $j\ge2$ is a product state with covariance matrix 
\[\begin{bmatrix} \lambda I_2 & 0 \\ 0 & \lambda I_2
\end{bmatrix}.\] }
\end{rem}
\section{Conclusion}
\par An exchangeable chain of mean zero finite mode Gaussian
states is completely determined by two matrices $A,\, B$
such that $A-B$ is the covariance matrix of a Gaussian state
and $B$ is a nonnegative definite matrix. Its asymptotic
entropy rate is equal to $S(\rho^{A-B})$, the von Neumann
entropy of the mean zero Gaussian state $\rho^{A-B}$ with
covariance matrix $A-B$. 
\par A class of examples of stationary Gaussian chains is
constructed and their asymptotic entropy rates are
evaluated. An example of a stationary entangled chain is
presented. 
\bibliographystyle{alpha}
\bibliography{biblio}

\end{document}